\newtheorem{theorem}{Theorem}
\newtheorem{lemma}{Lemma}
\newdefinition{remark}{Remark}
\newtheorem{corollary}{Corollary}
\journal{DMTCS}
\begin{document}

\begin{frontmatter}

\title{An upper bound for min-max angle of polygons}

%% or include affiliations in footnotes:
\author[mymainaddress]{Saeed Asaeedi\corref{mycorrespondingauthor}}
\cortext[mycorrespondingauthor]{Corresponding author}
\ead{asaeedi@kashanu.ac.ir}

\author[mysecondaryaddress]{Farzad Didehvar}

\author[mysecondaryaddress]{Ali Mohades}

\address[mymainaddress]{Department of Computer Science, Faculty of Mathematical Sciences, University of Kashan, Kashan 87317-53153,I. R. Iran}
\address[mysecondaryaddress]{Department of Mathematics and Computer Science, Amirkabir University of Technology, Tehran, Iran}

\begin{abstract}
Let $S$ be a set of $n$ points in the plane, $\wp(S)$ be the set of all simple polygons crossing $S$, $\gamma_P$ be the maximum angle of polygon $P \in \wp(S)$ and $\theta =min_{P\in\wp(S)} \gamma_P$. In this paper, we prove that $\theta\leq 2\pi-\frac{2\pi}{r.m}$ where $m$ and $r$ are the number of edges and inner points of the convex hull of $S$, respectively. We also propose an algorithm to construct a polygon with the said upper bound on its angles. Constructing a simple polygon with angular constraint on a given set of points in the plane can be used for path planning in robotics. Moreover, we improve our upper bound on $\theta$ and prove that this is tight for $r=1$.
\end{abstract}

\begin{keyword}
Min-Max angle \sep Upper bound \sep Angular onion peeling \sep Sweep arc \sep Simple polygonization \sep Computational geometry
\end{keyword}

\end{frontmatter}

%\linenumbers

\section{Introduction}
An optimal polygonization of a set of points in the plane is a classical problem in computational geometry and has been applied to many fields such as image processing~\cite{marchand1999binary,pakhira2011digital}, pattern recognition~\cite{pakhira2011digital,pavlidis2013structural,abdi2009effective}, geographic information system~\cite{galton2006region}, etc. Considering a set $S$ of points in the plane, there are different numbers of simple polygons on $S$. Enumerating and generating simple polygons on $S$ has been the focus of many studies~\cite{zhu1996generating,nourollah2017use,garcia2000lower,wettstein2014counting,meijer1990upper}.

Finding polygons with special properties over all polygonizations is of particular interest to researchers. The minimum and maximum area polygonization are NP-complete, as shown by Fekete~\cite{fekete1993area,fekete2000simple}. The problems of computing the simple polygons with minimum and maximum perimeters is the well-known NP-complete problems called TSP and max-TSP, respectively. There are many ongoing studies on approximation algorithm for minimum and maximum area polygonization~\cite{taranilla2011approaching,peethambaran2016empirical}, TSP~\cite{bartal2016traveling,moylett2017quantum} and max-TSP~\cite{dudycz20174}.

In some of these approaches the angles have been investigated in many problems over polygonization. The Angular-Metric TSP~\cite{aggarwal2000angular} is the problem of finding a tour on $S$ minimizing the sum of the direction changes at each point. Fekete and Woeginger introduced Angle-Restricted Tour problem in~\cite{fekete1997angle}. For a set $A\subseteq (-\pi,\pi]$ of angles, Angle-Restricted Tour is the problem of finding a simple or non-simple polygon on $S$ where all angles of the polygon belong to $A$. In~\cite{asaeedi2017alpha} $\alpha$-concave hull refers to a simple polygon $P$ with minimum area covering a set of points such that all angels of $P$ are less than or equal to $\pi+\alpha$.

Reflexivity, the smallest number of reflex vertices among all polygonizations of a set of points, is considered as a convexity measurement for those points. Arkin et al.~\cite{arkin2003reflexivity} introduced the concept of reflexivity and presented lower and upper bounds for reflexivity of any set of n points. E. Ackerman et al.~\cite{ackerman2009improved} improved the upper bound and proposed an algorithm to compute polygon with at most this number of reflex vertices in the time complexity of $O(n \log n)$. In~\cite{lien2008approximate} a convexity measurement has been proposed for polyhedra.

Rorabaugh~\cite{citation-0} investigated the min-max value of reflex angles in polygonizations as another convexity measurement for a set of points and derived an upper bound for their solution.

In~\cite{asaeedi2019nlp}, the upper bound $2\pi-\frac{2\pi}{2^{r-1}.m}$ is presented for min-max value of the angles in polygonization where $m$ and $r$ are the number of edges and inner points of the convex hull of $S$, respectively. Here we improved this upper bound to $2\pi-\frac{2\pi}{r.m}$.

The rest of the paper is as follows: In the section 2, notations and definitions are presented. In section 3, the upper bound is derived and in section 4, we conclude the paper highlighting its achievements.

\section{Preliminaries}

Let $S=\{s_1,s_2,...,s_n \}$ be a set of points in the plane and $CH$ be the convex hull of $S$.
The vertices and edges of $CH$ are denoted by $V_{CH}=\{c_1,c_2,...,c_m \}$ and $E_{CH}=\{e_1,e_2,...,e_m\}$, respectively. Furthermore, let $IP=\{a_1,a_2,...,a_r\}$ be the inner points of $CH$ where $r=n-m$. Table~\ref{tab:1} shows more notations that are used in the rest of the paper. A polygon $P$ crossing $S$ is specified by a closed chain of vertices $P=(p_1, p_2,...,p_n, p_1)$ such that $S=V_P=\{p_1, p_2,...,p_n\}$.

\begin{table}
\caption{Notations of symbols}
\label{tab:1}  
\resizebox{\textwidth}{!}{
\begin{tabular}{lll}
\hline\noalign{\smallskip}
Notation & Description  \\
\noalign{\smallskip}\hline\noalign{\smallskip}
$S$ & A set of points in the plane \\
$n$ & cardinality of $S$  \\
$s_i$ & $i$th point of $S$ ($1\leq i\leq n$)  \\
$CH$ & convex hull of $S$  \\
$m$ & number of vertices of $CH$ \\
$IP$ & inner points of $CH$ \\
$r$ & cardinality of $IP$ \\
$P$ & a simple Polygon crossing $S$ \\
$V_P$ & vertices of $P$ \\
$E_P$ & edges of $P$ \\
$c_j$ & $j$th vertex of $CH$ ($1\leq j\leq m$) \\
$e_j$ & $j$th edge of $CH$ ($1\leq j\leq m$) \\
$\overline{s_i s_j}$ & an edge of $P$ with $s_i$ and $s_j$ as its end points ($1\leq i,j\leq n$, $i\neq j$)\\
$\wp(S)$	& set of all simple polygons crossing $S$ \\
$\alpha$, $\beta$, $\gamma$, $\theta$ & angles between 0 and $2\pi$ \\
\noalign{\smallskip}\hline
\end{tabular}
}
\end{table}

Let $e=\overline{AB}$ be a line segment. The minor arc $\stackrel{\frown}{AB}$ with measure equal to $\alpha$ is denoted by $s_e^{\alpha}$, and the major arc $\stackrel{\frown}{AB}$ with measure equal to $\beta=2\pi-\alpha$ is denoted by $S_e^{\beta}$. We denote the minor and major segments on $e$ by $m_e^\alpha$ and $M_e^\beta$, respectively (see Fig.~\ref{fig:1}). Also, we use the concept of "Sweep Arc" in our algorithm which is defined in~\cite[Section 4]{asaeedi2019nlp}. A sweep arc on $e$ is a minor arc $s_e^0$ where it expands to the major arc $S_e^{2\pi}$.

\begin{figure}[h]
\centering
  \includegraphics[width=0.25\textwidth]{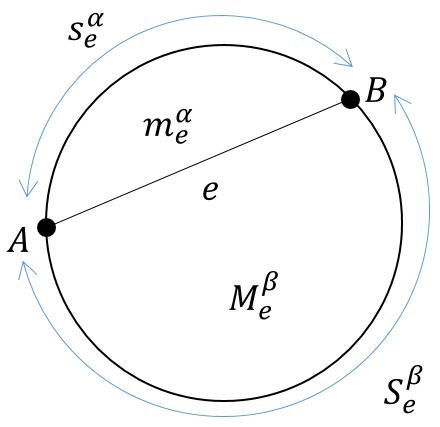}
\caption{The notations of minor arc, major arc, minor segment and major segment on $e$}
\label{fig:1} 
\end{figure}

\section{Min-Max Angle}
In this section we present two upper bounds for $\theta$. Let us first present a lemma followed by a theorem.

\begin{lemma}
\label{lem:3}
Let $l =\overline{c_1 c_2}$ be a line segment and $S$ be a set of $n$ points inside the $M_l^{\beta_{max}}$, where $\beta_{max}=2\pi-\frac{4\pi}{m}$ for an integer number $m$. Assume that $t$ points $\{s_1,s_2,...,s_t \}$ are met by the sweep arc on $l$ and $P=(c_1,s_1,s_2,...,s_t,c_2,c_1)$ is a simple polygon such that all internal angles of $\hat{s_i}$ are greater than or equal to $\frac{2\pi}{t.m}$. Let $x$ be $(t+1)$th point met by the sweep arc. There exists an edge $\overline{ab}$ of $P$ such that $\widehat{axb}$ is greater than or equal to $\frac{2\pi}{(t+1).m}$.
\end{lemma}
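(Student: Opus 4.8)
The plan is to reduce the statement to a pigeonhole argument over the edges of the polygonal chain $c_1, s_1, \ldots, s_t, c_2$, once a uniform lower bound on $\widehat{c_1 x c_2}$ has been secured. First I would record the geometric consequence of the hypothesis $x \in M_l^{\beta_{max}}$: since $\beta_{max}=2\pi-\frac{4\pi}{m}$, the bounding major arc $S_l^{\beta_{max}}$ subtends a minor arc of measure $\frac{4\pi}{m}$, so by the inscribed angle theorem every point of that arc sees the chord $l=\overline{c_1 c_2}$ under angle exactly $\frac{2\pi}{m}$, while every interior point of the major segment sees it under an angle at least as large. As $x$ is a point of $S$ and $S\subseteq M_l^{\beta_{max}}$, this gives $\widehat{c_1 x c_2}\geq\frac{2\pi}{m}$, which is the only place the major-segment hypothesis enters.

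Next I would identify the edges available for the conclusion. The chain running from $c_1$ to $c_2$ through the swept vertices consists of the $t+1$ edges $\overline{c_1 s_1}, \overline{s_1 s_2}, \ldots, \overline{s_{t-1} s_t}, \overline{s_t c_2}$, each of which is an edge of $P$. Regarding the directions of the rays $x c_1, x s_1, \ldots, x s_t, x c_2$ as points on the circle of directions, the angle between two rays is the geodesic distance on that circle and is therefore subadditive along the chain, so that
\begin{equation*}
\widehat{c_1 x c_2} \;\leq\; \widehat{c_1 x s_1} + \widehat{s_1 x s_2} + \cdots + \widehat{s_t x c_2},
\end{equation*}
a sum of exactly $t+1$ terms, each being the angle subtended at $x$ by one of the chain edges listed above.

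Combining the two bounds yields a total of at least $\frac{2\pi}{m}$ distributed over $t+1$ edges, so by the pigeonhole principle at least one edge $\overline{ab}$ satisfies $\widehat{axb}\geq\frac{1}{t+1}\cdot\frac{2\pi}{m}=\frac{2\pi}{(t+1).m}$, as required. I expect the only delicate point to be the subadditivity step: one must check that $\widehat{axb}$, the interior angle of triangle $axb$ at $x$, is genuinely the undirected angle between the rays $xa$ and $xb$, so that the triangle inequality for directions applies with no orientation caveat, and that each chain edge is indeed an edge of $P$. It is worth noting that the argument does not invoke the hypothesis that the angles $\hat{s_i}$ exceed $\frac{2\pi}{t.m}$; that assumption is carried along for the global induction in which this lemma serves as the inductive step, rather than for the present claim.
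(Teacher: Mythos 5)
Your argument is correct, but there is nothing in the paper to measure it against: the authors state Lemma~3 with no proof at all and invoke it only as a black box inside the proof of Lemma~4, so your write-up supplies a justification the paper omits. All three of your steps check out. The hypothesis $x\in M_l^{\beta_{max}}$ gives $\widehat{c_1 x c_2}\geq \frac{1}{2}\left(2\pi-\beta_{max}\right)=\frac{2\pi}{m}$ by the inscribed angle theorem; the undirected angle between rays emanating from $x$ is the geodesic metric on the circle of directions (all angles involved lie in $[0,\pi]$), so the triangle inequality applies along the chain of $t+1$ edges $\overline{c_1 s_1},\overline{s_1 s_2},\ldots,\overline{s_t c_2}$ with no orientation caveat; and pigeonhole finishes. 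Your side remark is also accurate: the hypothesis $\hat{s_i}\geq\frac{2\pi}{t.m}$ is inert in this lemma and serves only as the inductive invariant carried through Lemma~4.

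One caveat you should record if you intend the lemma to support Algorithm~1 exactly as the paper uses it: step~5 of that algorithm sets $\beta_j=0$ for edges not visible from $x$ and then maximizes, so what the algorithm really needs is that some edge \emph{visible} from $x$ subtends an angle at least $\frac{2\pi}{(t+1).m}$. Your subadditivity argument does not deliver visibility. It can be upgraded to do so: $x$ lies on the sweep arc and $P$ lies inside the convex circular segment bounded by that arc and the chord, so $x$ is exterior to $P$; every ray from $x$ into the sector $\widehat{c_1 x c_2}$ meets $\partial P$ (at the latest at the chord $\overline{c_2 c_1}$), its first hit is a point visible from $x$, and that first hit cannot lie in the relative interior of the chord edge because the chain separates that edge from $x$. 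Hence the sets of directions subtended by the visible chain edges cover the entire sector of measure at least $\frac{2\pi}{m}$, and pigeonhole over these at most $t+1$ edges yields the visible form of the conclusion. As stated, though, the lemma asks only for an edge of $P$, and your proof establishes exactly that.
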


\begin{proof}
We prove the lemma by induction on $t$. When $t=0$, the chain $P=(c_1,s_1,s_2,...,s_t,c_2,c_1)$ is a line segment $\overline{c_1 c_2 }$. Therefore, we consider both cases $t=0$ and $t=1$ as the base cases.
\paragraph{Base case $(t=0)$} Let $x$ be the first point that the sweeping arc meets. We construct the polygon by connecting $x$ to $c_1$ and $c_2$. Since the maximum measure of the arc is $\beta_{max}$, the internal angle of $\gamma=\widehat{c_1 xc_2}$ in the triangle $\triangle{c_1 xc_2}$ is greater than or equal to $\frac{2\pi}{m}$ (see Fig.~\ref{fig:7}).

\begin{figure}[h]
\centering
  \includegraphics[width=0.35\textwidth]{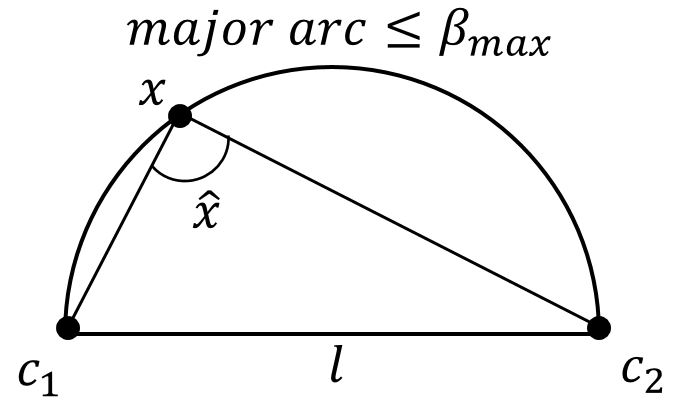}
\caption{$\hat{x}$ is greater than $\frac{2\pi}{m}$.}
\label{fig:7} 
\end{figure}

\paragraph{Base case $(t=1)$} Let $s_1$ be the first point that the sweeping arc meets and $x$ be the second one. Also, let $e_1=\overline{c_1 s_1}$ and $e_2=\overline{s_1 c_2}$ be two edges of $P=(c_1,s_1,c_2,c_1)$. The edges $e_1$ and $e_2$ divide the sweeping arc into 3 parts; the arc $B_1$ where $e_1$ is visible but $e_2$ is not visible from all the points on it, the arc $B_2$ where $e_2$ is visible but $e_1$ is not visible from all the points on it, and finally the arc $B_3$ where $e_1$ and $e_2$ are both visible from all the points on it (see Fig.~\ref{fig:8}).

\begin{figure}[h]
  \includegraphics[width=0.95\textwidth]{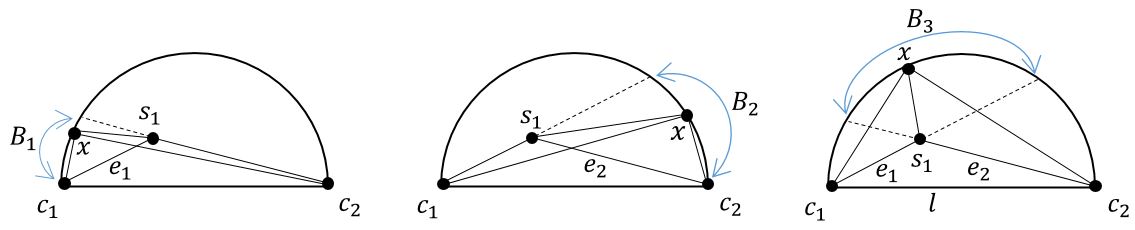}
\caption{The edges $e_1$ and $e_2$ divide the sweeping arc into 3 parts: $B_1$, $B_2$ and $B_3$.}
\label{fig:8} 
\end{figure}

\begin{enumerate}[{case} 1.]
\item 
If $x$ is placed on $B_1$, the angle $\widehat{c_1 xs_1}$ is greater than $\gamma=\widehat{c_1 xc_2}$ and the angle $\gamma$ is greater than or equal to $\frac{2\pi}{m}$. Hence, the angle $\widehat{c_1 xs_1}$ is greater than $\frac{2\pi}{m}$. So, we consider the edge $\overline{c_1 s_1 }$ as the desired edge $\overline{ab}$ such that $\widehat{axb}$ is greater than or equal to $\frac{\pi}{m}$.
\item
If $x$ is placed on $B_2$, the angle $\widehat{s_1 xc_2}$ is greater than $\gamma$ and the angle $\gamma$ is greater than or equal to $\frac{2\pi}{m}$. Hence, the angle $\widehat{s_1 xc_2}$ is greater than $\frac{2\pi}{m}$. So, we consider the edge $\overline{s_1 c_2}$ as the desired edge $\overline{ab}$ such that $\widehat{axb}$ is greater than or equal to $\frac{\pi}{m}$.
\item
If $x$ is placed on $B_3$, the maximum of $\widehat{c_1 xs_1}$ and $\widehat{s_1 xc_2}$ is greater than $\frac{\gamma}{2}$. Since $\gamma$ is greater than $\frac{2\pi}{m}$, the maximum of $\widehat{c_1 xs_1}$ and $\widehat{s_1 xc_2}$ is greater than $\frac{2\pi}{2m}$. Hence, if $\widehat{c_1 xs_1}$ is greater than $\widehat{s_1 xc_2}$, the edge $\overline{c_1 s_1}$ is considered as $\overline{ab}$, otherwise, the edge $\overline{s_1 c_2}$ is considered as $\overline{ab}$.
\end{enumerate}

\paragraph{Induction assumption} Let $y$ be $k$th point that the sweeping arc meets. There exists an edge $\overline{ab}$ of $P=(c_1,s_1,s_2,...,s_{k-1},c_2,c_1)$ such that $\widehat{ayb}$ is greater than or equal to $\frac{2\pi}{k.m}$.

\paragraph{Inductive step} Let $x$ be $(k+1)$th point that the sweeping arc meets and $P=(c_1,s_1,s_2,…,s_k,c_2,c_1)$ be the polygon such that all internal angles of $\hat{s_i}$ are greater than or equal to $\frac{2\pi}{k.m}$. We show that there exists an edge $\overline{ab}$ of $P$ such that $\widehat{ayb}$ is greater than or equal to $\frac{2\pi}{(k+1).m}$. Here, three cases need to be examined:

\begin{enumerate}[{case} 1.]
\item
All edges of $P$ except $\overline{c_1 c_2}$ are visible from $x$. Let $e_1=\overline{c_1 s_1}$, $e_2=\overline{s_1 s_2}$, ... ,and $e_{k+1}=\overline{s_k c_2}$ be edges of $P$, $\beta_i$ be the angle subtended by $e_i$ at the point $x$ and $\beta_M$ be the maximum one. Since the angle $\gamma$ is greater than or equal to $\frac{2\pi}{m}$ and $\Sigma_{i=1}^{k+1} \beta_i =\gamma$, we have $\beta_M>\frac{2\pi}{(k+1).m}$. Let $e$ be the edge that corresponds to $\beta_M$. So, the edge $e$ is considered as $\overline{ab}$ such that $\widehat{axb}$ is greater than or equal to $\frac{2\pi}{(k+1).m}$ (see Fig.~\ref{fig:9}).

\begin{figure}[h]
\centering
  \includegraphics[width=0.45\textwidth]{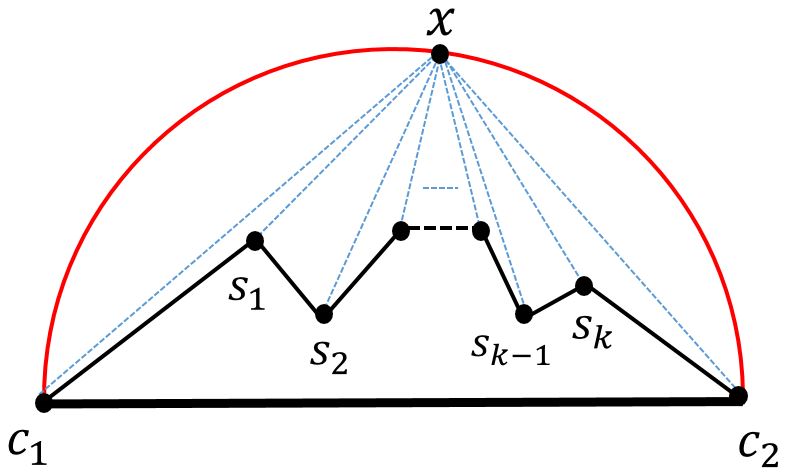}
\caption{All edges $e_i$ are visible from $x$.}
\label{fig:9} 
\end{figure}

\item
There exists an edge $e=\overline{cd}$ of $P$ such that both endpoints of $e$ are not visible from $x$. We obtain a polygon $P'$ from $P$ by contracting~\cite{harary1969graph} the edge $e$, i.e. $P'=P/e$. Since $P'$ has $k+1$ vertex points, by induction assumption, there exists an edge $e'=\overline{ab}$ of $P'$ such that $\widehat{axb}$ is greater than or equal to $\frac{2\pi}{k.m}$. The polygon $P''$ is obtained from $P$ by removing the edge $\overline{ab}$ and adding two edges $\overline{ax}$ and $\overline{xb}$. Since the two end points of $e=\overline{cd}$ are invisible from $x$, contracting and splitting $e$ has no effect on the measure of the angle $\widehat{axb}$. Hence, the angle $\widehat{axb}$ in $P''$ is greater than or equal to $\frac{2\pi}{(k+1).m}$ (see Fig.~\ref{fig:10}).

\begin{figure}[h]
  \includegraphics[width=1\textwidth]{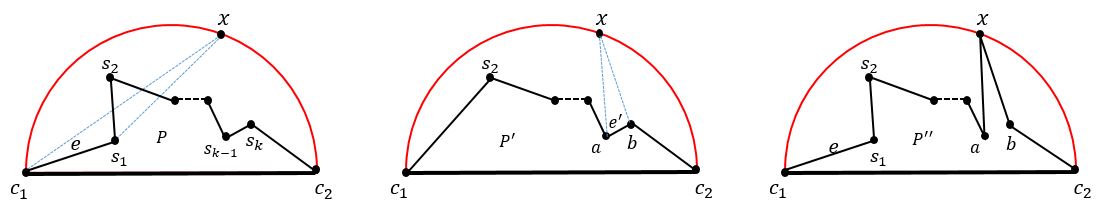}
\caption{The edge $e$ of $P$ is invisible from $x$. Contracting $e$ leads to construct $P'$ from $P$. The polygon $P''$ obtained from $P$ and $P'$.}
\label{fig:10} 
\end{figure}

\item
There exists an edge $e=\overline{cd}$ of $P$ such that one endpoint of $e$ is not visible from $x$ (see Fig.~\ref{fig:11}). We obtain a polygon $P'=P/e$ from $P$ by contracting the edge $e$. Since $P'$ has $k+1$ vertex points, by induction assumption, there exists an edge $e'=\overline{ab}$ of $P'$ such that $\widehat{axb}$ is greater than or equal to $\frac{2\pi}{k.m}$. The polygon $P''$ is obtained from $P$ by removing the edge $\overline{ab}$ and adding two edges $\overline{ax}$ and $\overline{xb}$.

\begin{figure}[h]
\centering
  \includegraphics[width=0.45\textwidth]{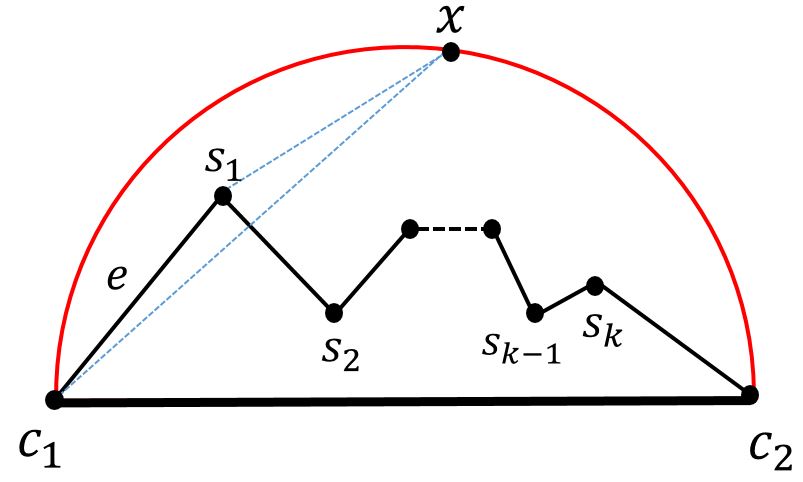}
\caption{The vertex $s_1$ is visible and $c_1$ is invisible from $x$.}
\label{fig:11} 
\end{figure}

If either $a$ or $b$ in $P''$ be an endpoint of $e$, contracting and splitting $e$ has an effect on the measure of the angle $\widehat{axb}$ (see fig.~\ref{fig:12}). In other words, the angle $\widehat{axb}$ in $P''$ is not equal to the angle $\widehat{axb}$ in $P'$. It is clear that the angle $\widehat{axb}$ in $P''$ is greater than the angle $\widehat{axb}$ in $P'$. Since the angle $\widehat{axb}$ in $P'$ is greater than or equal to $\frac{2\pi}{k.m}$, the angle $\widehat{axb}$ in $P''$ is greater than or equal to $\frac{2\pi}{(k+1).m}$.

\begin{figure}[h]
\centering
  \includegraphics[width=0.75\textwidth]{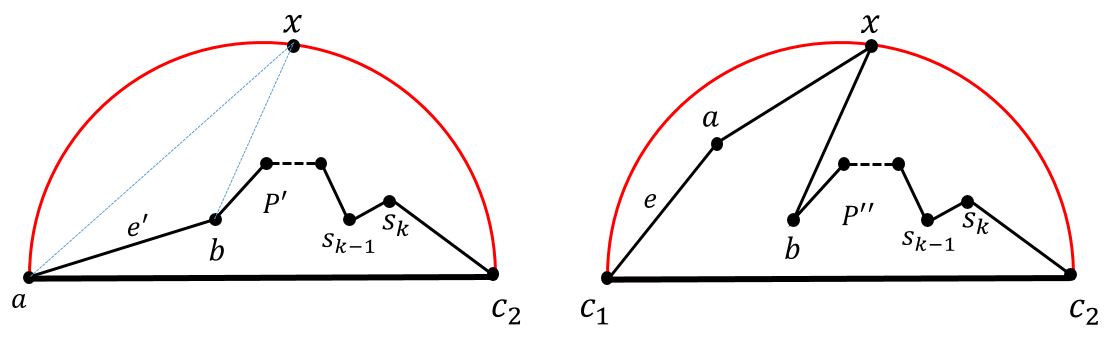}
\caption{$\widehat{axb}$ in polygon $P''$ is greater than $\widehat{axb}$ in polygon $P'$.}
\label{fig:12} 
\end{figure}

 Also, if both points $a$ or $b$ in $P''$ are not the endpoints of $e$, contracting and splitting $e$ has no effect on the measure of the angle $\widehat{axb}$ (see fig.~\ref{fig:13}). In other words, the angle $\widehat{axb}$ in $P''$ is equal to the angle $\widehat{axb}$ in $P'$. Hence, the angle $\widehat{axb}$ in $P''$ is greater than or equal to $\frac{2\pi}{(k+1).m}$.
\end{enumerate}

\begin{figure}[h]
\centering
  \includegraphics[width=0.75\textwidth]{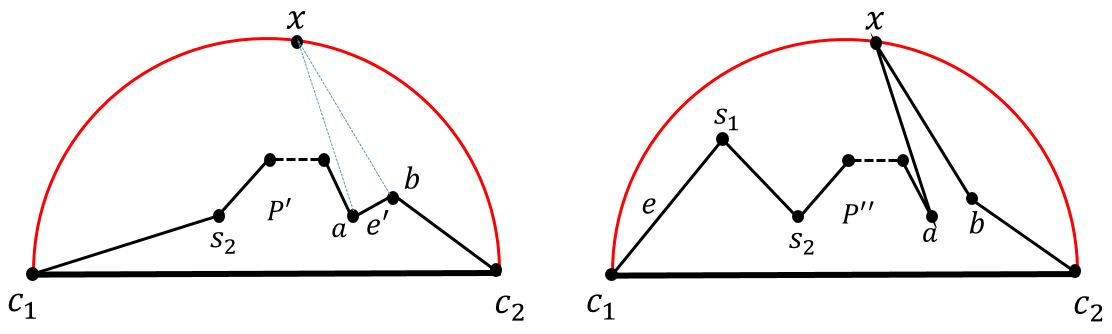}
\caption{$\widehat{axb}$ in polygon $P''$ is equal to $\widehat{axb}$ in polygon $P'$.}
\label{fig:13} 
\end{figure}
\end{proof}

\begin{theorem}
\label{lem:4}
Let $l =\overline{c_1 c_2}$ be a line segment and $S$ be a set of $n$ points inside the $M_l^{\beta_{max}}$, such that $\beta_{max}=2\pi-\frac{4\pi}{m}$ for an integer number $m$ (see Fig.~\ref{fig:14}.a). There exists a chain $(s_1,s_2,...,s_n)$ on $S$ such that all internal angles of $\hat{s_i}$ in the polygon $(c_1,s_1,s_2,...,s_n,c_2,c_1)$ are greater than or equal to $\frac{2\pi}{n.m}$ (see Fig.~\ref{fig:14}.b).
\end{theorem}

\begin{figure}[h]
\centering
  \includegraphics[width=0.65\textwidth]{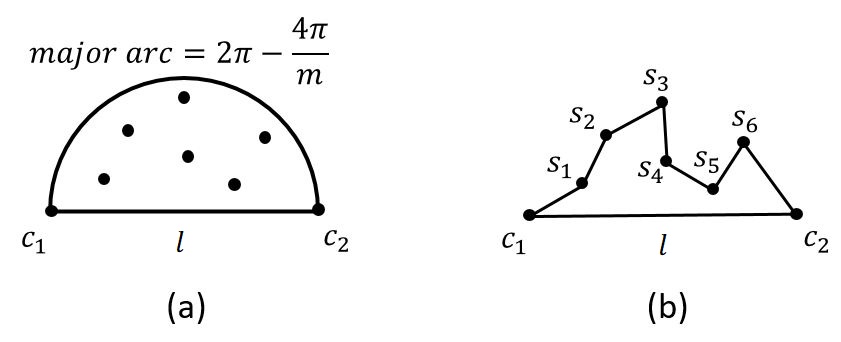}
\caption{a. $S$ is a set of 6 points inside $M_l^{\beta_{max}}$. \qquad b. $\forall 1\leq i\leq 6$, $\hat{s_i}\geq \frac{2\pi}{6m}$ .}
\label{fig:14} 
\end{figure}

\begin{proof}
We prove theorem~\ref{lem:4} by constructing the polygon $(c_1,s_1,s_2,...,s_n,c_2,c_1)$, using the following algorithm which is a modified version of that originally presented in~\cite{asaeedi2019nlp}:
\paragraph{\textbf{Algorithm 1 (Modified Sweep Arc Algorithm)}} 
\begin{enumerate}
\item
Sweep the arc $\stackrel{\frown}{c_1 c_2}$ from $s_l^0$ to $S_l^{\beta_{max}}$.
\item
Let $x_1$ be the first point which is met by the sweep arc. Construct $P=(c_1,x_1,c_2,c_1)$ as the desired polygon.
\item
Set $i=2$.
\item
Let $P=(c_1,s_1,s_2...,s_{i-1},c_2,c_1)$ be the constructed polygon inside the sweep arc and $x_i$ be the $i$th point which is met by the sweep arc.
\item
Assume that $e_1=\overline{c_1 s_1}$, $e_2=\overline{s_1 s_2}$, ... , and $e_i=\overline{s_{i-1} c_2}$ are the edges of $P$. If $e_j$ is visible from $x$, set $\beta_j=$The angle subtended by $e_j$ at the point $x$, otherwise set $\beta_j=0$.
\item
Let $\beta_M=\max_{1 \leq j \leq i} \beta_j$ and $e=\overline{ab}$ be the edge that corresponds to $\beta_M$.
\item
Remove the edge $e$ from $P$ and add two edges $\overline{ax_i}$ and $\overline{x_i b}$ to construct the desired polygon.
\item
Set $i=i+1$. If $i\leq n$, then go to 4, otherwise exit.
\end{enumerate}
Based on Lemma~\ref{lem:3}, $\forall j\in\{1,2,...,i\}$ the angles $\hat{s_j}$ in $P$ are greater than or equal to $\frac{2\pi}{i.m}$ in step 4 of the algorithm. Therefore, when $i=n$, the angles $\hat{s_j}$ in $P$ are greater than or equal to $\frac{2\pi}{n.m}$.
\end{proof}

%~~~~~~~~~~~~~check~~~~~~~~~~~~

It is proved in~\cite[Lemma 3]{asaeedi2019nlp} that all angles of the mentioned polygon $P$ are greater than or equal to $\frac{2\pi}{2^{n-1}.m}$. Here, based on theorem~\ref{lem:4}, we increase this bound to $\frac{2\pi}{n.m}$. This yields us the following corollaries:

%\begin{corollary}
%\label{cor:1}
%Let $S$ be a set of points in the plane, $CH$ be the convex hull of $S$ and $m$ and $r$ be the number of edges and inner points of $CH$, respectively. 
%If we apply the modified sweep arc algorithm in~\cite[Theorem 2]{asaeedi2019nlp}, the upper bound $2\pi-\frac{2\pi}{r.m}$ can be achieved for $\theta$.
%%Using algorithm 2 of [???], the upper bound $2\pi-\frac{2\pi}{r.m}$ is achieved for $\theta$.
%\end{corollary}

\begin{corollary}
\label{cor:1}
Let $S$ be a set of points in the plane, $CH$ be the convex hull of $S$ and $m$ and $r$ be the number of edges and inner points of $CH$, respectively. 
If we replace algorithm 1 of~\cite[Theorem 2, Step 2.a of Algorithm 2]{asaeedi2019nlp} by the modified sweep arc algorithm, the upper bound $2\pi-\frac{2\pi}{r.m}$ is achieved for $\theta$.
\end{corollary}

\begin{remark}
\label{rmk:1}
 Based on corollary~\ref{cor:1}, in the case of $r=1$, $2\pi-\frac{2\pi}{n-1}$ is an upper bound for $\theta$ over all simple polygons crossing $S$. It is noteworthy that this bound is tight in this case. The tightness is achieved when the inner point is at the center of a regular n-gons, as illustrated in Fig.~\ref{fig:5}.
\end{remark}

\begin{figure}[h]
\centering
  \includegraphics[width=0.3\textwidth]{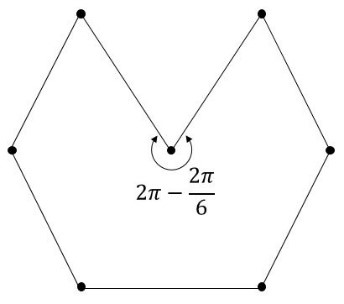}
\caption{Maximum angle of each polygon crossing these points is equal to $2\pi-\frac{2\pi}{6}$}
\label{fig:5} 
\end{figure}

%\begin{corollary}
%\label{cor:2}
%Let $S$ be a set of points in the plane, $CH$ be the convex hull of $S$, $m$ be the cardinality of edges of $CH$ and $d$ be the depth of angular onion peeling on $S$. Using algorithm 3 of [???], the upper bound $2\pi-\frac{2\pi}{d.m}$ is achieved for $\theta$.
%\end{corollary}

The following corollary improved the upper bound to $2\pi-\frac{2\pi}{d.m}$ where $d$ is depth of angular onion peeling on $S$ which is defined in~\cite{asaeedi2019nlp}.

\begin{corollary}
\label{cor:2}
Let $S$ be a set of points in the plane, $CH$ be the convex hull of $S$, $m$ be the cardinality of edges of $CH$ and $d$ be the depth of angular onion peeling on $S$. 
If we replace algorithm 1 of~\cite[Theorem 3, Step 2.b of Algorithm 3]{asaeedi2019nlp} by the modified sweep arc algorithm, the upper bound $2\pi-\frac{2\pi}{d.m}$ is achieved for $\theta$.
\end{corollary}

Since the time complexity of modified sweep arc algorithm is $O(r)$, those of both modified algorithm 2 and 3 are $O(n\log{n}+rm)$. Note that the modified algorithm 2 and 3 are those proposed in~\cite{asaeedi2019nlp} in which the algorithm 1 is replaced by the modified sweep arc algorithm. Based on corollary~\ref{cor:1}, the modified algorithm 2 constructs a polygon such that its internal angles are less than or equal to $2\pi-\frac{2\pi}{r.m}$. Based on corollary~\ref{cor:2}, this bound is improved to $2\pi-\frac{2\pi}{d.m}$ using modified algorithm 3. When $S$ is a set of $n$ points in the plane and the convex hull of $S$ has $n-1$ edges, the depth of angular onion peeling on $S$ is equal to 1. Hence, the upper bound for $\theta$ is equal to $2\pi-\frac{2\pi}{1.(n-1)}$ which confirms the remark~\ref{rmk:1}.

Computing $\alpha$-concave hull on a set $S$ of points is an NP-complete problem~\cite{asaeedi2017alpha}. For all $\alpha>\theta$, $\alpha$-concave hull crosses all points of $S$. So, the polygon computed by modified algorithm 3 is an $\alpha$-polygon~\cite{asaeedi2017alpha} which approximates $\alpha$-concave hull of $S$. The following corollary shows the relation between $\alpha$-concave hull and the computed upper bound.

\begin{corollary}
Let $S$ be a set of points in the plane, $CH$ be the convex hull of $S$, $m$ be the cardinality of edges of $CH$ and $d$ be the depth of angular onion peeling on $S$. For all $\alpha>2\pi-\frac{2\pi}{d.m}$, there always exists an $\alpha$-concave hull $P$ on $S$ such that $P$ crosses all points of $S$.
\end{corollary}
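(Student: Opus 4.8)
The plan is to reduce the statement to the min-max quantity $\theta$ and then invoke the constructive bound of Theorem~\ref{thrm:2}. First I would recall that $\theta=\min_{P\in\wp(S)}\gamma_P$ and observe that Theorem~\ref{thrm:2} produces an explicit polygon $Q\in\wp(S)$ all of whose internal angles are strictly less than $2\pi-\frac{2\pi}{d.m}$; since $\gamma_Q$ is by definition the maximum such angle, this immediately yields $\theta\leq\gamma_Q<2\pi-\frac{2\pi}{d.m}$. Consequently, under the hypothesis $\alpha>2\pi-\frac{2\pi}{d.m}$ we obtain $\alpha>\theta$, which is precisely the regime in which the $\alpha$-concave hull is guaranteed to pass through every point of $S$.

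The key steps, in order, are then the following: (i) run Algorithm~3 on $S$ to obtain the polygon $Q$ above; (ii) note that $Q$ covers $S$, since every point of $S$ is a vertex of $Q$ and hence lies on its boundary, and that every internal angle of $Q$ is at most $2\pi-\frac{2\pi}{d.m}<\alpha$, so $Q$ is a feasible polygon for the $\alpha$-concave hull problem; (iii) conclude that the family of simple polygons covering $S$ whose internal angles are bounded by $\alpha$ is nonempty, so the minimum-area member of this family, namely the $\alpha$-concave hull $P$, exists. This much follows directly from Theorem~\ref{thrm:2} together with the inequality $\alpha>\theta$.

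The substantive part, and the step I expect to be the main obstacle, is showing that the area-minimal polygon $P$ itself crosses all of $S$, rather than merely that some feasible polygon (namely $Q$) does. I would argue by contradiction: if some $s\in S$ were not a vertex of $P$, then since $P$ covers $S$ the point $s$ must lie in the interior of $P$; choosing an edge $\overline{ab}$ of $P$ that is visible from $s$ and replacing it by the two edges $\overline{as}$ and $\overline{sb}$ dents the boundary inward and strictly decreases the enclosed area. The delicate point is verifying that this local modification keeps all angles at most $\alpha$, so that the denting preserves feasibility and thereby contradicts the minimality of $P$; this is exactly the angular bookkeeping performed by the sweep-arc construction in Lemma~\ref{lem:3} and Lemma~\ref{lem:4}, and the condition $\alpha>\theta$ is what supplies enough angular slack for the insertion to remain admissible. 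Since this crossing property is established in~\cite{asaeedi2017alpha} for the regime $\alpha>\theta$, I would either cite it directly or reproduce the denting argument above, and then combine it with the inequality $\alpha>\theta$ derived in the first paragraph to complete the proof.
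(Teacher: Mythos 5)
Your proposal is correct and follows essentially the same route as the paper: the corollary is justified there by exactly your first step, namely that Theorem~\ref{thrm:2} gives $\theta < 2\pi-\frac{2\pi}{d.m} < \alpha$, combined with the fact quoted from~\cite{asaeedi2017alpha} that for all $\alpha>\theta$ the $\alpha$-concave hull crosses all points of $S$. The paper relies purely on that citation, so your fallback denting argument is unnecessary (and, as you suspect, it is the delicate part: for an arbitrary minimum-area feasible polygon a visible edge subtending an angle of at least $2\pi-\alpha$ at the missed point need not exist, so the local modification alone does not obviously preserve feasibility); citing~\cite{asaeedi2017alpha} as you propose is what completes the proof.
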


Coverage path planning is a fundamental problem in the field of robotics. There are many limitation factors in order to plan a path for a robot to cover (or visit) all points of a set of points, such as robot rotation angle. The following corollary presents the essential relation between path planning in robotics and our upper bounds on $\theta$. 
\begin{corollary}
Let $S$ be a set of $n$ points in the plane, $CH$ be the convex hull of $S$, $m$ be the cardinality of edges of $CH$ and $d$ be the depth of angular onion peeling on $S$. If the robot rotation angle is greater than $2\pi-\frac{2\pi}{d.m}$, there always exists a path for the robot to cover $S$. As stated before, this path can be found in $O(n\log{n}+rm)$.
\end{corollary}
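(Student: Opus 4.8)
The plan is to derive this corollary almost entirely from Theorem~\ref{thrm:2}, treating the polygon it produces as the desired robot path. First I would make explicit the correspondence between a polygonization and a coverage path: any simple polygon $P \in \wp(S)$ is a closed tour whose vertex set is exactly $S$, so traversing the edges of $P$ in cyclic order yields a closed path that visits (covers) every point of $S$. At each waypoint $s_i$ the robot must reorient from the incoming edge to the outgoing edge, and the turn it has to perform is governed by the internal angle of $P$ at $s_i$; identifying the ``robot rotation angle'' with the largest internal angle the robot can negotiate at a waypoint is the key modeling step.

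For the existence half, I would invoke Theorem~\ref{thrm:2} to obtain a polygon $P \in \wp(S)$ whose internal angles are all strictly less than $2\pi - \frac{2\pi}{d \cdot m}$. Since the hypothesis assumes the robot rotation angle exceeds $2\pi - \frac{2\pi}{d \cdot m}$, every internal angle of $P$ lies within the robot's capability, so the robot can follow $P$ and thereby cover $S$. The running time is inherited directly: the constructive proof of Theorem~\ref{thrm:2} (Algorithm 3, built on the sweep arc algorithm) produces $P$ in $O(n \log n + r m)$ time, which I would simply cite.

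For tightness I would exhibit a configuration of points for which $\theta = 2\pi - \frac{2\pi}{d \cdot m}$ exactly, so that every polygonization---and hence every candidate robot path---has some internal angle at least $2\pi - \frac{2\pi}{d \cdot m}$. The natural candidate is the extremal family already used in Remark~\ref{rmk:1}: a regular $m$-gon together with concentric layers of inner points realizing depth $d$, arranged so that the covering estimate of Lemma~\ref{lem:2} is met with equality at every layer. On such an instance a robot whose rotation angle is at most $2\pi - \frac{2\pi}{d \cdot m}$ cannot execute the forced sharp turn, so no coverage path exists, establishing that the threshold cannot be lowered.

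The main obstacle I anticipate is twofold. The lighter difficulty is pinning down the modeling convention so that ``rotation angle'' and internal angle are related cleanly and the inequality directions match; once that dictionary is fixed, the existence part is immediate from Theorem~\ref{thrm:2}. The genuinely hard part is the tightness construction: I must verify that the extremal instance actually forces $\theta = 2\pi - \frac{2\pi}{d \cdot m}$ for the prescribed $d$, i.e.\ that no clever polygonization avoids the large angle. This requires a matching \emph{lower} bound on $\theta$ for that configuration, rather than merely the upper bound furnished by Theorem~\ref{thrm:2}.
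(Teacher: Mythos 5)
Your existence-plus-runtime argument coincides with what the paper actually does: the corollary carries no proof of its own and is meant to follow immediately from Theorem~\ref{thrm:2} (Algorithm~3 yields a polygon in $\wp(S)$ with all internal angles below $2\pi-\frac{2\pi}{d\cdot m}$, computable in $O(n\log n + rm)$), together with the dictionary you spell out between a polygonization of $S$ and a closed coverage tour whose turns are bounded by the maximum internal angle. That part of your proposal is correct and essentially the paper's route, just made explicit.

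On tightness, however, you should be aware of what the paper does and does not establish, because it changes how your plan should be judged. The paper's only evidence for tightness is Remark~\ref{rmk:1}: the configuration with $m=n-1$ hull vertices and a single inner point at the center of a regular polygon, i.e.\ the case $d=1$, where the center vertex is forced into every polygonization and its internal angle with any pair of hull vertices is at least $2\pi-\frac{2\pi}{n-1}$. No extremal family for general $d$ is constructed anywhere in the paper. Your proposed construction (concentric layers of inner points realizing depth $d$, with Lemma~\ref{lem:2} met with equality at every layer) is strictly more ambitious than anything the paper contains, and as you yourself note, it requires a matching \emph{lower} bound on $\theta$ for that configuration --- an argument that no clever polygonization threads the layers while avoiding a reflex angle of $2\pi-\frac{2\pi}{d\cdot m}$. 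You have not supplied that argument, so this step remains a genuine gap in your proposal; but it is a gap you share with the paper, whose blanket claim of tightness is, strictly speaking, only supported at $d=1$. If you restrict your tightness claim to the $d=1$ case and verify the forced-angle computation at the center point (adjacent hull vertices subtend an angle of exactly $\frac{2\pi}{n-1}$ there, so the reflex internal angle is exactly $2\pi-\frac{2\pi}{n-1}$), your proof is complete at the same level of rigor as the paper; proving tightness for arbitrary $d$ is an open task that neither you nor the authors have discharged.
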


\section{Conclusion}
The major problem investigated in this paper is that of finding a simple polygon with angular constraint on a given set of points in the plane. We derived the upper bounds for min-max value of angles over all simple polygons crossing the given set of points. We also presented algorithms to compute the polygons thereby satisfying the derived upper bounds. In addition to the theoretical results, this bound is an important achievement in the field of robotic.

\section{Acknowledgements}
The research of the first author is partially supported by the University of Kashan under grant number 991449/1.

%\section*{References}

\bibliography{mybibfile}

\end{document}